\documentclass[a4paper,10pt]{article}


\usepackage{fullpage}
\usepackage{setspace}
\usepackage{slashbox}
\doublespacing

\usepackage[latin1]{inputenc} 
\usepackage{natbib}
\usepackage{amsmath}
\usepackage{amsfonts}
\usepackage{amssymb}
\usepackage{amsthm}
\usepackage{ctable}
\usepackage{subfigure}
\usepackage{graphicx}
\usepackage{cancel}
\usepackage{array}
\usepackage{color}
\usepackage{colortbl} 
\usepackage{multirow} 
\usepackage{subfigure} 
\usepackage{latexsym}
\usepackage{textcomp}
\usepackage{tipa}

\newtheorem{lem}{Proposition} 




\begin{document}

\title{Flexible Mixture Modeling with the\\ Polynomial Gaussian Cluster-Weighted Model} 

\author{
Antonio Punzo\thanks{Dipartimento di Economia e Impresa - Universit\`{a} di Catania (Italy), e.mail: \texttt{antonio.punzo@unict.it}} 
}

\date{}

\maketitle


%

\begin{abstract}

In the mixture modeling frame, this paper presents the polynomial Gaussian cluster-weighted model (CWM).
It extends the linear Gaussian CWM, for bivariate data, in a twofold way.
Firstly, it allows for possible nonlinear dependencies in the mixture components by considering a polynomial regression.
Secondly, it is not restricted to be used for model-based clustering only being contextualized in the most general model-based classification framework.
Maximum likelihood parameter estimates are derived using the EM algorithm and model selection is carried out using the Bayesian information criterion (BIC) and the integrated completed likelihood (ICL).
The paper also investigates the conditions under which the posterior probabilities of component-membership from a polynomial Gaussian CWM coincide with those of other well-established mixture-models which are related to it.
With respect to these models, the polynomial Gaussian CWM has shown to give excellent clustering and classification results when applied to the artificial and real data considered in the paper.

\end{abstract}

\textbf{Key words}: Mixture of distributions, Mixture of regressions, Polynomial regression, Model-based clustering, Model-based classification, Cluster-weighted models. 

\section{Introduction}
\label{sec:introduction}

Finite mixture models are commonly employed in statistical modeling with two different purposes \citep[][pp.~2--3]{Titt:Smit:Mako:stat:1985}.
In indirect applications, they are used as semiparametric competitors of nonparametric density estimation techniques (see \citealt[][pp.~28--29]{Titt:Smit:Mako:stat:1985}, \citealt[][p.~8]{McLa:Peel:fini:2000} and \citealt{Esco:West:Baye:1995}).
On the other hand, in direct applications, finite mixture models are considered as a powerful device for clustering and classification by assuming that each mixture-component represents a group (or cluster)
in the original data (see \citealt{Fral:Raft:Howm:1998} and \citealt{McLa:Basf:mixt:1988}).
The areas of application of mixture models range from biology and medicine \citep[see][]{Schla:Medi:2009} to economics and marketing \citep[see][]{Wede:Kama:Mark:2001}.
An overview on mixture models is given in 
\cite{McLa:Peel:fini:2000} and \cite{Fruh:Fine:2006}.

This paper focuses both on direct and indirect applications.
The context of interest is represented by data arising from a real-valued bivariate random vector $\left(X,Y\right)'$ in which a functional dependence of $Y$ on $x$ is assumed for each mixture-component.
Note that, hereafter, all vectors are considered to be column vectors.
The linear cluster-weighted model (CWM; \citealt{Gers:Nonl:1997}) constitutes a natural choice, in the mixtures frame, when this functional relationship is supposed to be linear.
The (linear) CWM factorizes the joint density of $\left(X,Y\right)'$, in each mixture-component, into the product of the conditional density of $Y|x$ and the marginal density of $X$.
A Gaussian distribution is usually used for both of them leading to the (linear) Gaussian CWM 
(see \citealp{Ingr:Mino:Vitt:Loca:2011,Ingr:Mino:Punz:Mode:2012} for details on this model and for an extension to the $t$ distribution).
Generally, regardless from the density shape, CWMs take simultaneously into account the potential of finite mixtures of regressions \citep[see][Chapter~8]{Fruh:Fine:2006} and of finite mixtures of distributions (see, \citealt{Titt:Smit:Mako:stat:1985} and \citealt{McLa:Peel:fini:2000}); the idea of the former approach is adopted to model the conditional density of $Y|x$, while the principle of the latter is used to model both the joint density of $\left(X,Y\right)'$ and the marginal density of $X$.
Unfortunately, if the component-dependence structure of $Y$ on $x$ is different from the linear one, the linear CWM is not able to capture it.

To solve this problem, the present paper illustrates the polynomial Gaussian CWM.
It generalizes the linear Gaussian CWM by using a polynomial model for the dependence of $Y$ on $x$ in each mixture-component.
Regarding the comparison between a polynomial Gaussian CWM and a finite mixture of polynomial Gaussian regressions, two related aspects need to be preliminarily highlighted: from an indirect point of view, in fact, they are not comparable since the latter is not conceived to model the joint density of $\left(X,Y\right)'$, but only the conditional density of $Y|x$, and this difference, from a direct point of view, implicitly affects the criterion (vertical distance for finite mixture of regressions) used for clustering and classification. 
With respect to the latter point, in the paper we will show both artificial and real data on which the polynomial Gaussian CWM outperforms finite mixtures of polynomial Gaussian regressions.
Also,
most of the existing literature and applications for bivariate mixture models are based on Gaussian components. 
Mixtures of $t$ distributions have been considered as robust alternatives (see \citealt{Peel:McLa:2000} and \citealt{Gres:Ingr:Cons:2010}).
These two models have been widely used for model-based clustering (\citealt{Cele:Gova:Gaus:1995}, \citealt{Banf:Raft:mode:1993}, and \citealt{McLa:Peel:Robu:1998}) and model-based classification (\citealt{Dean:Murp:Down:Usin:2006} and \citealt{Andr:McNi:Sube:Mode:2011}).
However, Gaussian and $t$ models imply that the groups are elliptically contoured, which is rather restrictive. 
Furthermore, it is well known that non-elliptical subpopulations can be approximated quite well by a mixture of several basic densities like the Gaussian one (\citealt{Fral:Raft:Howm:1998}, \citealt{Dasg:Raft:Dete:1998}, \citealt{McLa:Peel:fini:2000} p.~1 and \citealt{Titt:Smit:Mako:stat:1985} p.~24). 
While this can be very helpful for modeling purposes, it can be misleading when dealing with clustering
applications since one group may be represented by more than one component just because it has, in fact, a non-elliptical density. 
A common approach to treat non-elliptical subpopulations consists in considering transformations so as to make the components as elliptical as possible and then fitting symmetric (usually Gaussian) mixtures (\citealt{Guti:Carr:Wang:Lee:Tayl:Anal:1995} and \citealt{Lo:Brin:Gott:Cyto:2008}).
Although such a treatment is very convenient to use, the achievement of joint ellipticality is rarely satisfied and the transformed variables become more difficult to interpret.
Instead of applying transformations, there has been a growing interest in proposing finite mixture models where the component densities are non-elliptical (above all in terms of skewness) so as to represent correctly the non-elliptical subpopulations (\citealt{Karl:Sant:Mode:2009} and \citealt{Lin:Maxi:2009,Lin:Robu:2010}). While such models can be used to create finite mixture models and provide alternative shapes for the derived clusters, they have certain computational difficulties.
On the contrary the polynomial Gaussian CWM, which allows the bivariate density components to be flexible enough by allowing the polynomial degree to increase, is easily applicable, has much simpler expressions for estimation purposes, and generates easily interpreted clusters.

The paper is organized as follows.
In Section~\ref{sec:Model Definition} the polynomial Gaussian CWM is presented and, in Section~\ref{sec:Modeling framework}, its use is contextualized in the presence of possible labeled observations among the available ones.
In Section~\ref{sec:EM}, maximum likelihood estimation of the model parameters is approached by considering, and detailing, the EM-algorithm.
With this regard, computational details are given in Section~\ref{subsec:Computational issues} while methods to select the number of components and the polynomial degree are given in Section~\ref{sec:Model selection and performance evaluation}.
Some theoretical notes about the relation, from a direct point of view, with the classification provided by other models in the mixture frame, are given in Section~\ref{subsec:preliminary considerations}.
Artificial and real data are considered in Section~\ref{subsec:Artificial data} and Section~\ref{subsec:Real data}, and the paper closes, with discussion and suggestions for further work, in Section~\ref{sec:conclusions}. 

\section{Model Definition}
\label{sec:Model Definition}

Let 
\begin{equation}
p\left(x,y;\boldsymbol{\psi}\right)=\sum_{j=1}^k \pi_j f\left(x,y;\boldsymbol{\vartheta}_j\right) 
\label{eq:mixture model}
\end{equation}
be the finite mixture of distributions, with $k$ components, used to estimate the joint density of $\left(X,Y\right)'$. 
In \eqref{eq:mixture model}, $f\left(\cdot;\boldsymbol{\vartheta}_j\right)$ is the parametric (with respect to the vector $\boldsymbol{\vartheta}_j$) density associated to the $j$th component, $\pi_j$ is the weight of the $j$th component, with $\pi_j>0$ and $\sum_{j=1}^k\pi_j=1$, and $\boldsymbol{\psi}=\left(\boldsymbol{\pi}',\boldsymbol{\vartheta}'\right)'$, with $\boldsymbol{\pi}=\left(\pi_1,\ldots,\pi_{k-1}\right)'$ and $\boldsymbol{\vartheta}=\left(\boldsymbol{\vartheta}_1,\ldots,\boldsymbol{\vartheta}_k\right)'$, contains all the unknown parameters in the mixture.
As usual, model \eqref{eq:mixture model} implicitly assumes that the component densities should all belong to the same parametric family.

Now, suppose that for each $j$ the functional dependence of $Y$ on $x$ can be modeled as 
\begin{equation}
Y=\mu_j\left(x\right)+\varepsilon_j, 
\label{eq:regression model}
\end{equation}
where $\mu_j\left(x\right)=E\left(Y|X=x,j\right)$ is the regression function and $\varepsilon_j$ is the error variable having a Gaussian distribution with zero mean and a finite constant variance $\sigma_{\varepsilon_j}^2$, hereafter simply denoted by $\varepsilon_j\sim N\left(0,\sigma_{\varepsilon_j}^2\right)$.
Thus, for each $j$, $Y|x\sim N\left(\mu_j\left(x\right),\sigma_{\varepsilon_j}^2\right)$.   
In the parametric paradigm, the polynomial regression function 
\begin{equation}
\mu_j\left(x\right)=\mu_r\left(x;\boldsymbol{\beta}_j\right)=\sum_{l=0}^r\beta_{lj}x^l=\boldsymbol{\beta}_j'\boldsymbol{x}
\label{eq:polynomial regression function}
\end{equation}
represents a very flexible way to model the functional dependence in each component.
In \eqref{eq:polynomial regression function}, $\boldsymbol{\beta}_j=\left(\beta_{0j},\beta_{1j},\ldots,\beta_{rj}\right)'$ is the $\left(r+1\right)$-dimensional vector of real parameters, $\boldsymbol{x}$ is the $\left(r+1\right)$-dimensional Vandermonde vector associated to $x$, while $r$ is an integer representing the polynomial order (degree) which is assumed to be fixed with respect to $j$.

The mixture model \eqref{eq:mixture model} becomes a polynomial Gaussian CWM when the $j$th component joint density is factorized as
\begin{equation}
f\left(x,y;\boldsymbol{\vartheta}_j\right)=\phi\left(y\left|x;\mu_r\left(x;\boldsymbol{\beta}_j\right),\sigma_{\varepsilon_j}^2\right.\right)\phi\left(x;\mu_{X|j},\sigma_{X|j}^2\right).
\label{eq:component density}
\end{equation}
In \eqref{eq:component density}, $\phi\left(\cdot\right)$ denotes a Gaussian density; this means that, for each $j$, $X\sim N\left(\mu_{X|j},\sigma_{X|j}^2\right)$.
Summarizing, the polynomial Gaussian CWM has equation
\begin{equation}
p\left(x,y;\boldsymbol{\psi}\right)=
\sum_{j=1}^k\pi_j\phi\left(y\left|x;\mu_r\left(x;\boldsymbol{\beta}_j\right),\sigma_{\varepsilon_j}^2\right.\right)\phi\left(x;\mu_{X|j},\sigma_{X|j}^2\right).
\label{eq:polynomial CWM}
\end{equation}
Note that the number of free parameters in \eqref{eq:polynomial CWM} is $\eta=kr+4k-1$. 
Moreover, if $r=1$ in equation \eqref{eq:polynomial regression function}, model \eqref{eq:polynomial CWM} corresponds to the linear Gaussian CWM widely analyzed in \cite{Ingr:Mino:Vitt:Loca:2011}. 


\section{Modeling framework}
\label{sec:Modeling framework}


As said in Section~\ref{sec:introduction}, the polynomial Gaussian CWM, being a mixture model, can be also used for direct applications, where the aim is to clusterize/classify observations which have unknown component memberships (the so-called unlabeled observations).
To embrace both clustering and classification purposes, we have chosen a very general scenario where there are $n$ observations $\left(x_1,y_1\right)',\ldots,\left(x_n,y_n\right)'$, $m$ of which are labeled.
As a special case, if $m=0$, we obtain the clustering scenario. 
Within the model-based classification framework, we use all the $n$ observations to estimate the parameters in \eqref{eq:polynomial CWM}; the fitted mixture model is so adopted to classify each of the $n-m$ unlabeled observations through the corresponding maximum \textit{a posteriori} probability (MAP). 
Drawing on \citet{Hosm:acom:1973}, \citet[][Section~4.3.3]{Titt:Smit:Mako:stat:1985} pointed out that knowing the label of just a small proportion of observations \textit{a priori} can lead to improved clustering performance.  

Notationally, let $\boldsymbol{z}_i$ be the $k$-dimensional component-label vector in which the $j$th element $z_{ij}$ is defined to be one or zero according to whether the mixture-component of origin of $\left(x_i,y_i\right)'$ is equal to $j$ or not, $j=1,\ldots,k$.
If the $i$th observation is labeled, denote with $\widetilde{\boldsymbol{z}}_i=\left(\widetilde{z}_{i1},\ldots,\widetilde{z}_{ik}\right)$ its component membership indicator. 
Then, arranging the data so that the first $m$ observations are labeled, the observed sample can be denoted by $\mathcal{S}=\left\{\mathcal{S}_l,\mathcal{S}_u\right\}$, where $\mathcal{S}_l=\left\{\left(x_1,y_1,\widetilde{\boldsymbol{z}}_1'\right)',\ldots,\left(x_m,y_m,\widetilde{\boldsymbol{z}}_m'\right)'\right\}$ is the sample of labeled observations while $\mathcal{S}_u=\left\{\left(x_{m+1},y_{m+1}\right)',\ldots,\left(x_n,y_n\right)'\right\}$ is the sample of unlabeled observations.
The completed-data sample can be so indicated by $\mathcal{S}_c=\left\{\mathcal{S}_l,\mathcal{S}_u^*\right\}$, where $\mathcal{S}_u^*=\left\{\left(x_{m+1},y_{m+1},\boldsymbol{z}_{m+1}'\right)',\ldots,\left(x_n,y_n,\boldsymbol{z}_n'\right)'\right\}$. 
Hence, the observed-data log-likelihood for the polynomial Gaussian CWM, when both $k$ and $r$ are supposed to be pre-assigned, can be written as
\begin{equation}
l\left(\boldsymbol{\psi}\right)=\sum_{i=1}^m\sum_{j=1}^k\widetilde{z}_{ij}\left[\ln\pi_j+\ln f\left(x_i,y_i;\boldsymbol{\vartheta}_j\right)\right]+\sum_{i=m+1}^n\ln\left[\sum_{j=1}^k\pi_jf\left(x_i,y_i;\boldsymbol{\vartheta}_j\right)\right],
\label{eq:observed-data log-likelihood}
\end{equation}
while the complete-data log-likelihood is
\begin{eqnarray}
l_c\left(\boldsymbol{\psi}\right)&=&\sum_{i=1}^m\sum_{j=1}^k\widetilde{z}_{ij}\left[\ln \pi_j+\ln f\left(x_i,y_i;\boldsymbol{\vartheta}_j\right)\right]+\sum_{i=m+1}^n\sum_{j=1}^kz_{ij}\left[\ln \pi_j+\ln f\left(x_i,y_i;\boldsymbol{\vartheta}_j\right)\right]\nonumber\\
&=&l_{1c}\left(\boldsymbol{\pi}\right)+l_{2c}\left(\boldsymbol{\kappa}\right)+l_{3c}\left(\boldsymbol{\xi}\right),
\label{eq:complete-data log-likelihood}
\end{eqnarray}
where $\boldsymbol{\kappa}=\left(\boldsymbol{\kappa}_1',\ldots,\boldsymbol{\kappa}_k'\right)'$ with $\boldsymbol{\kappa}_j=\left(\mu_{X|j},\sigma_{X|j}^2\right)'$, $\boldsymbol{\xi}=\left(\boldsymbol{\xi}_1',\ldots,\boldsymbol{\xi}_k'\right)'$ with $\boldsymbol{\xi}_j=\left(\boldsymbol{\beta}_j,\sigma_{\varepsilon_j}^2\right)'$, and 
\begin{eqnarray}
l_{1c}\left(\boldsymbol{\pi}\right)&=&\sum_{i=1}^m\sum_{j=1}^k\widetilde{z}_{ij}\ln \pi_j+\sum_{i=m+1}^n\sum_{j=1}^kz_{ij}\ln \pi_j \label{eq:lC1}\\
l_{2c}\left(\boldsymbol{\kappa}\right)
&=&\frac{1}{2}\sum_{i=1}^m\sum_{j=1}^k\widetilde{z}_{ij}\left[-\ln\left(2\pi\right)-\ln\left(\sigma_{\varepsilon_j}^2\right)-\frac{\left(y_i-\boldsymbol{\beta}_j'\boldsymbol{x}_i\right)^2}{\sigma_{\varepsilon_j}^2}\right]+\nonumber\\
&&+\frac{1}{2}\sum_{i=m+1}^n\sum_{j=1}^kz_{ij}\left[-\ln\left(2\pi\right)-\ln\left(\sigma_{\varepsilon_j}^2\right)-\frac{\left(y_i-\boldsymbol{\beta}_j'\boldsymbol{x}_i\right)^2}{\sigma_{\varepsilon_j}^2}\right] \label{eq:lC2}\\
l_{3c}\left(\boldsymbol{\xi}\right) &=& \frac{1}{2}\sum_{i=1}^m\sum_{j=1}^k\widetilde{z}_{ij}\left[-\ln\left(2\pi\right)-\ln\left(\sigma_{X|j}^2\right)-\frac{\left(x_i-\mu_{X|j}\right)^2}{\sigma_{X|j}^2}\right]+\nonumber\\
&&+\frac{1}{2}\sum_{i=m+1}^n\sum_{j=1}^kz_{ij}\left[-\ln\left(2\pi\right)-\ln\left(\sigma_{X|j}^2\right)-\frac{\left(x_i-\mu_{X|j}\right)^2}{\sigma_{X|j}^2}\right]. \label{eq:lC3}
\end{eqnarray}

\section{The EM algorithm for maximum likelihood estimation}
\label{sec:EM}

The EM algorithm \citep{Demp:Lair:Rubi:Maxi:1977} can be used to maximize $l\left(\boldsymbol{\psi}\right)$ in order to find maximum likelihood (ML) estimates for the unknown parameters of the polynomial Gaussian CWM.
When both labeled and unlabeled data are used, the E and M steps of the algorithm can be detailed as follows. 


%
%

\subsection[E-step]{E-step}
\label{subsec:E-step}

The E-step, on the $\left(q+1\right)$th iteration, requires the calculation of
\begin{equation}
Q\left(\boldsymbol{\psi};\boldsymbol{\psi}^{\left(q\right)}\right)=E_{\boldsymbol{\psi}^{\left(q\right)}}\left[l_c\left(\boldsymbol{\psi}\right)\left|\mathcal{S}_u^*\right.\right].
\label{eq:Q}
\end{equation}
As $l_c\left(\boldsymbol{\psi}\right)$ is linear in the unobservable data $z_{ij}$, the E-step -- on the $\left(q+1\right)$th iteration -- simply requires the calculation of the current conditional expectation of $Z_{ij}$ given the observed sample, where $Z_{ij}$ is the random variable corresponding to $z_{ij}$.
In particular, for $i=m+1,\ldots,n$ and $j=1,\ldots,k$, it follows that
\begin{eqnarray}
E_{\boldsymbol{\psi}^{\left(q\right)}}\left(Z_{ij}\left|\mathcal{S}_u^*\right.\right)&=&z_{ij}^{\left(q\right)}\nonumber\\
&=&\frac{\pi_j^{\left(q\right)} f\left(x_i,y_i;\boldsymbol{\vartheta}_j^{\left(q\right)}\right)}{p\left(x_i,y_i;\boldsymbol{\psi}^{\left(q\right)}\right)},
\label{eq:EZ}
\end{eqnarray}
which corresponds to the posterior probability that the unlabeled observation $\left(x_i,y_i\right)'$ belongs to the $j$th component of the mixture, using the current fit $\boldsymbol{\psi}^{\left(q\right)}$ for $\boldsymbol{\psi}$.
By substituting the values $z_{ij}$ in \eqref{eq:complete-data log-likelihood} with the values $z_{ij}^{\left(q\right)}$ obtained in \eqref{eq:EZ}, we have
\begin{equation}
Q\left(\boldsymbol{\psi};\boldsymbol{\psi}^{\left(q\right)}\right)=Q_1\left(\boldsymbol{\pi};\boldsymbol{\psi}^{\left(q\right)}\right)+Q_2\left(\boldsymbol{\kappa};\boldsymbol{\psi}^{\left(q\right)}\right)+Q_3\left(\boldsymbol{\xi};\boldsymbol{\psi}^{\left(q\right)}\right),
\label{eq:decomposition Q}
\end{equation} 
where
\begin{eqnarray}
Q_1\left(\boldsymbol{\pi};\boldsymbol{\psi}^{\left(q\right)}\right)&=&\sum_{i=1}^m\sum_{j=1}^k\widetilde{z}_{ij}\ln \pi_j+\sum_{i=m+1}^n\sum_{j=1}^kz_{ij}^{\left(q\right)}\ln \pi_j \label{eq:Q1}\\
Q_2\left(\boldsymbol{\kappa};\boldsymbol{\psi}^{\left(q\right)}\right)
&=&\frac{1}{2}\sum_{i=1}^m\sum_{j=1}^k\widetilde{z}_{ij}\left[-\ln\left(2\pi\right)-\ln\left(\sigma_{\varepsilon_j}^2\right)-\frac{\left(y_i-\boldsymbol{\beta}_j'\boldsymbol{x}_i\right)^2}{\sigma_{\varepsilon_j}^2}\right]+\nonumber\\
&&+\frac{1}{2}\sum_{i=m+1}^n\sum_{j=1}^kz_{ij}^{\left(q\right)}\left[-\ln\left(2\pi\right)-\ln\left(\sigma_{\varepsilon_j}^2\right)-\frac{\left(y_i-\boldsymbol{\beta}_j'\boldsymbol{x}_i\right)^2}{\sigma_{\varepsilon_j}^2}\right] \label{eq:Q2}\\
Q_3\left(\boldsymbol{\xi};\boldsymbol{\psi}^{\left(q\right)}\right) &=& \frac{1}{2}\sum_{i=1}^m\sum_{j=1}^k\widetilde{z}_{ij}\left[-\ln\left(2\pi\right)-\ln\left(\sigma_{X|j}^2\right)-\frac{\left(x_i-\mu_{X|j}\right)^2}{\sigma_{X|j}^2}\right]+\nonumber\\
&&+\frac{1}{2}\sum_{i=m+1}^n\sum_{j=1}^kz_{ij}^{\left(q\right)}\left[-\ln\left(2\pi\right)-\ln\left(\sigma_{X|j}^2\right)-\frac{\left(x_i-\mu_{X|j}\right)^2}{\sigma_{X|j}^2}\right]. \label{eq:Q3}
\end{eqnarray}

\subsection[M-step]{M-step}
\label{subsec:M-step}

On the M-step, at the $\left(q+1\right)$th iteration, it follows from \eqref{eq:decomposition Q} that $\boldsymbol{\pi}^{\left(q+1\right)}$, $\boldsymbol{\kappa}^{\left(q+1\right)}$ and $\boldsymbol{\xi}^{\left(q+1\right)}$ can be computed independently of each other, by separate maximization of \eqref{eq:Q1}, \eqref{eq:Q2} and \eqref{eq:Q3}, respectively.
Here, it is important to note that the solutions exist in closed form.

Regarding the mixture weights, maximization of $Q_1\left(\boldsymbol{\pi};\boldsymbol{\psi}^{\left(q\right)}\right)$ with respect to $\boldsymbol{\pi}$, subject to the constraints on those parameters, is obtained by maximizing the augmented function 
\begin{equation}
\sum_{i=1}^m\sum_{j=1}^k\widetilde{z}_{ij}\ln \pi_j+\sum_{i=m+1}^n\sum_{j=1}^kz_{ij}^{\left(q\right)}\ln \pi_j-\lambda\left(\sum_{j=1}^k\pi_j-1\right),
\label{eq:Lagrange}
\end{equation}
where $\lambda$ is a Lagrangian multiplier.
Setting the derivative of equation \eqref{eq:Lagrange} with respect to $\pi_j$ equal to zero and solving for $\pi_j$ yields to
\begin{equation}
\pi_j^{\left(q+1\right)}=\displaystyle\frac{\displaystyle\sum_{i=1}^m\widetilde{z}_{ij}+\sum_{i=m+1}^nz_{ij}^{\left(q\right)}}{n}.
\label{eq:updated weights}
\end{equation}  

With reference to the updated estimates of $\boldsymbol{\kappa}_j$, $j=1,\ldots,k$, maximization of $Q_2\left(\boldsymbol{\kappa};\boldsymbol{\psi}^{\left(q\right)}\right)$ leads to
\begin{eqnarray*}
\mu_{X|j}^{\left(q+1\right)} 
&=&
\displaystyle\frac{\displaystyle\sum_{i=1}^m\widetilde{z}_{ij}x_i+\sum_{i=m+1}^nz_{ij}^{\left(q\right)}x_i}{\displaystyle\sum_{i=1}^m\widetilde{z}_{ij}+\sum_{i=m+1}^nz_{ij}^{\left(q\right)}}\label{eq:updated mean vector for X} \\
\sigma_{X|j}^{\left(q+1\right)} 
&=& \left(\displaystyle\frac{\displaystyle\sum_{i=1}^m\widetilde{z}_{ij}\left(x_i-\mu_{X|j}^{\left(q+1\right)}\right)^2+\sum_{i=m+1}^nz_{ij}^{\left(q\right)}\left(x_i-\mu_{X|j}^{\left(q+1\right)}\right)^2}{\displaystyle\sum_{i=1}^m\widetilde{z}_{ij}+\sum_{i=m+1}^nz_{ij}^{\left(q\right)}}\right)^{1/2}.
\label{eq:updated covariance matrix for X}
\end{eqnarray*}

Finally, regarding the update estimates of $\boldsymbol{\xi}_j$, $j=1,\ldots,k$, maximization of $Q_3\left(\boldsymbol{\xi};\boldsymbol{\psi}^{\left(q\right)}\right)$, after some algebra, yields to
\begin{eqnarray*}
\beta_{lj}^{\left(q+1\right)}&=&\left(\frac{\displaystyle\sum_{i=1}^m\widetilde{z}_{ij}\text{\textsubtilde{$\boldsymbol{x}_i$}}\text{\textsubtilde{$\boldsymbol{x}_i$}}'+\sum_{i=m+1}^nz_{ij}^{\left(q\right)}\text{\textsubtilde{$\boldsymbol{x}_i$}}\text{\textsubtilde{$\boldsymbol{x}_i$}}' }{\displaystyle\sum_{i=1}^m\widetilde{z}_{ij}+\sum_{i=m+1}^nz_{ij}^{\left(q\right)}}-\frac{\displaystyle\sum_{i=1}^m\widetilde{z}_{ij}\text{\textsubtilde{$\boldsymbol{x}_i$}}+\sum_{i=m+1}^n z_{ij}^{\left(q\right)}\text{\textsubtilde{$\boldsymbol{x}_i$}}
}{\displaystyle\sum_{i=1}^m\widetilde{z}_{ij}+\sum_{i=m+1}^nz_{ij}^{\left(q\right)}
}\frac{\displaystyle\sum_{i=1}^m\widetilde{z}_{ij}\text{\textsubtilde{$\boldsymbol{x}_i$}}'+\sum_{i=m+1}^n z_{ij}^{\left(q\right)}\text{\textsubtilde{$\boldsymbol{x}_i$}}'}{\displaystyle\sum_{i=1}^m\widetilde{z}_{ij}
+\sum_{i=m+1}^nz_{ij}^{\left(q\right)}}\right)^{-1}
\cdot\nonumber\\
&&\cdot \left(
\frac{\displaystyle\sum_{i=1}^m\widetilde{z}_{ij}^{\left(q\right)}y_i\text{\textsubtilde{$\boldsymbol{x}_i$}}+\sum_{i=m+1}^nz_{ij}^{\left(q\right)}y_i\text{\textsubtilde{$\boldsymbol{x}_i$}}}{\displaystyle\sum_{i=1}^m\widetilde{z}_{ij}+\sum_{i=m+1}^nz_{ij}^{\left(q\right)}}
-
\frac{\displaystyle\sum_{i=1}^m\widetilde{z}_{ij}y_i+\sum_{i=m+1}^nz_{ij}^{\left(q\right)}y_i}{\displaystyle\sum_{i=1}^m\widetilde{z}_{ij}
+\sum_{i=m+1}^nz_{ij}^{\left(q\right)}}\frac{\displaystyle\sum_{i=1}^m\widetilde{z}_{ij}\text{\textsubtilde{$\boldsymbol{x}_i$}}+\sum_{i=m+1}^nz_{ij}^{\left(q\right)}\text{\textsubtilde{$\boldsymbol{x}_i$}}}{\displaystyle\sum_{i=1}^m\widetilde{z}_{ij}
+\sum_{i=m+1}^nz_{ij}^{\left(q\right)}}\right)
\label{eq:updated beta1},\quad l=1,\ldots,r\\
\beta_{0j}^{\left(q+1\right)}&=&
\displaystyle\frac{\displaystyle\sum_{i=1}^m\widetilde{z}_{ij}y_i+\sum_{i=m+1}^nz_{ij}^{\left(q\right)}y_i}{\displaystyle\sum_{i=1}^m\widetilde{z}_{ij}+\sum_{i=m+1}^nz_{ij}^{\left(q\right)}}-\sum_{l=1}^r\beta_{lj}^{\left(q+1\right)}\displaystyle\frac{\displaystyle\sum_{i=1}^m\widetilde{z}_{ij}x^l_i+\sum_{i=m+1}^nz_{ij}^{\left(q\right)}x^l_i}{\displaystyle\sum_{i=1}^m\widetilde{z}_{ij}+\sum_{i=m+1}^nz_{ij}^{\left(q\right)}}
\label{eq:updated beta0}\\
\sigma^{\left(q+1\right)}_{\varepsilon_j}&=&
\left(\displaystyle\frac{\displaystyle\sum_{i=1}^m\widetilde{z}_{ij}\left(y_i-\boldsymbol{x}_i'\boldsymbol{\beta}_j^{\left(q+1\right)}\right)^2+\sum_{i=m+1}^nz_{ij}^{\left(q\right)}\left(y_i-\boldsymbol{x}_i'\boldsymbol{\beta}_j^{\left(q+1\right)}\right)^2}{\displaystyle\sum_{i=1}^m\widetilde{z}_{ij}+\sum_{i=m+1}^nz_{ij}^{\left(q\right)}}\right)^{1/2},\label{eq:updated conditioned variances}
\end{eqnarray*}
where the $r$-dimensional vector \textsubtilde{$\boldsymbol{x}_i$} is obtained from the Vandermonde vector $\boldsymbol{x}_i$ by deleting its first element.

\subsection{Some considerations}
\label{subsec:Some considerations}

In the following, the estimates obtained with the EM algorithm will be indicated with a hat.
Thus, for example, $\widehat{\boldsymbol{\psi}}$ and $\widehat{z}_{ij}$ will denote, respectively, the estimates of $\boldsymbol{\psi}$ and $z_{ij}$, $i=m+1,\ldots,n$ and $j=1,\ldots,k$.  
As said before, the fitted mixture model can be used to classify the $n-m$ unlabeled observations via the MAP classification induced by
\begin{equation}
\text{MAP}\left(\widehat{z}_{ij}\right)=\left\{
\begin{array}{ccl}
1 && \text{if $\max_h\left\{\widehat{z}_{ih}\right\}$ occurs at component $j$}\\
0 && \text{otherwise}\\
\end{array}
\right.
\label{eq:MAP}
\end{equation}
$i=m+1,\ldots,n$ and $j=1,\ldots,k$.
Note that the MAP classification is used in the analyses of Section~\ref{sec:illustrative examples}.

As an alternative to the EM algorithm, one could adopt the well-known Classification EM \citep[CEM;][]{Cele:Gova:Acla:1992} algorithm, although it maximizes $l_c\left(\boldsymbol{\psi}\right)$.
The CEM algorithm is almost identical to the EM algorithm, except for a further step, named C step, considered between the standard E and M ones.
In particular, in the C step the $z_{ij}^{\left(q\right)}$ in \eqref{eq:EZ} are substituted with $\text{MAP}\left(z_{ij}^{\left(q\right)}\right)$, and the obtained partition is used in the M-step.  




\section{Computational issues}
\label{subsec:Computational issues}

Code for the EM algorithm (as well as its CEM variant) described in Section~\ref{sec:EM} was written in the \texttt{R} computing environment \citep{R}.

\subsection{EM initialization}
\label{subsec:EM initialization}

Before running the EM algorithm, the choice of the starting values constitutes an important issue.
The standard initialization consists in selecting a value for $\boldsymbol{\psi}^{\left(0\right)}$.
An alternative approach \citep[see][p.~54]{McLa:Peel:fini:2000}, more natural in the modeling frame described in Section~\ref{sec:Modeling framework}, is to perform the first E-step by specifying, in equation \eqref{eq:EZ}, the values of $\boldsymbol{z}_i^{\left(0\right)}$, $i=m+1,\ldots,n$, for the unlabeled observations.
Among the possible initialization strategies (see \citealt{Bier:Cele:Gova:Choo:2003} and \citealt{Karl:Xeka:Choo:2003} for details) -- according to the \texttt{R}-package \texttt{flexmix} (\citealt{Leis:Flex:2004} and \citealt{Grun:Leis:Flex:2008}) which allows to estimate finite mixtures of polynomial Gaussian regressions -- a random initialization is repeated $t$ times from different random positions and the solution maximizing the observed-data log-likelihood 
among these $t$ runs is selected.
In each run, the $n-m$ vectors $\boldsymbol{z}_i^{\left(0\right)}$ are randomly drawn from a multinomial distribution with probabilities $\left(1/k,\ldots,1/k\right)$.

\subsection{Convergence criterion}
\label{subsec:Convergence criterion}


The Aitken acceleration procedure \citep{Aitk:OnBe:1926} is used to estimate the asymptotic maximum of the log-likelihood at each iteration of the EM algorithm. 
Based on this estimate, a decision can be made regarding whether or not the algorithm has reached convergence;
that is, whether or not the log-likelihood is sufficiently close to its estimated asymptotic value. 
The Aitken acceleration at iteration $k$ is given by
\begin{displaymath}
	a^{\left(k\right)}=\frac{l^{\left(k+1\right)}-l^{\left(k\right)}}{l^{\left(k\right)}-l^{\left(k-1\right)}},
\end{displaymath}
where $l^{\left(k+1\right)}$, $l^{\left(k\right)}$, and $l^{\left(k-1\right)}$ are the log-likelihood values from iterations $k+1$, $k$, and $k-1$, respectively. 
Then, the asymptotic estimate of the log-likelihood at iteration $k + 1$ \citep{Bohn:Diet:Scha:Schl:Lind:TheD:1994} is given by
\begin{displaymath}	l_{\infty}^{\left(k+1\right)}=l^{\left(k\right)}+\frac{1}{1-a^{\left(k\right)}}\left(l^{\left(k+1\right)}-l^{\left(k\right)}\right).
\end{displaymath}
In the analyses in Section~\ref{sec:illustrative examples}, we follow \citet{McNi:Mode:2010} and stop our algorithms when $l_{\infty}^{\left(k+1\right)}-l^{\left(k\right)}<\epsilon$, with $\epsilon=0.05$.

\subsection{Standard errors of the estimates}
\label{subsec:Estimating the degrees of freedom}

Once the EM algorithm is run, the covariance matrix of the estimated parameters $\widehat{\boldsymbol{\psi}}$ is determined by using the inverted negative Hessian matrix, as computed by the general purpose optimizer \verb|optim| in the \texttt{R}-package \texttt{stats}, on the observed-data log-likelihood.
In particular, \verb|optim| is initialized in the solution provided by the EM algorithm.
The optimization method of \citet{Byrd:Lu:Noce:Zhu:Alim:1995} is considered among the possible options of the \verb|optim| command.
As underlined by \citet{Loui:Find:Jour:1982} and \citet{Bold:Magn:Maxi:2009}, among others, the complete-data log-likelihood could be considered, instead of the observed-data log-likelihood, in order to simplify the computations because of its form as a sum of logarithms rather than a logarithm of a sum.
 

\section{Model selection and performance evaluation}
\label{sec:Model selection and performance evaluation}

The polynomial Gaussian CWM, in addition to $\boldsymbol{\psi}$, is also characterized by the polynomial degree ($r$) and by the number of components $k$.
So far, these quantities have been treated as \textit{a priori} fixed.
Nevertheless, for practical purposes, choosing a relevant model needs their choice.

\subsection{Bayesian information criterion and integrated completed likelihood}
\label{subsec:The Bayesian information criterion and the integrated completed likelihood}

A common way to select $r$ and $k$ consists in computing a convenient (likelihood-based) model selection criterion across a reasonable range of values for the couple $\left(r,k\right)$ and then choosing the couple associated to the best value of the adopted criterion.
Among the existing model selection criteria, the Bayesian information criterion \citep[BIC;][]{Schw:Esti:1978} and the integrated completed likelihood \citep[ICL;][]{Bier:Cele:Gova:Asse:2000} constitute the reference choices in the recent literature on mixture models.

The BIC is commonly used in model-based clustering and classifications applications involving a family of mixture models (\citealt{Fral:Raft:Mode:2002} and \citealt{McNi:Murp:Pars:2008}). 
The use of the BIC in mixture model selection was proposed by \citet{Dasg:Raft:Dete:1998}, based on an approximation to Bayes factors \citep{Kass:Raft:Baye:1995}.
In our context the BIC is given by
\begin{equation}
\text{BIC}=2l\left(\widehat{\boldsymbol{\psi}}\right)-\eta\ln n.
\label{eq:BIC}
\end{equation}
\citet{Lero:Cons:1992} and \citet{Keri:Cons:2000} present theoretical results that, under certain regulatory conditions, support the use of the BIC for the estimation of the number of components in a mixture model. 

One potential problem with using the BIC for model selection in model-based classification or clustering applications is
that a mixture component does not necessarily correspond to a true cluster.
For example, a cluster might be represented by two mixture components.
In an attempt to focus model selection on clusters rather than mixture components, \citet{Bier:Cele:Gova:Asse:2000} introduced the ICL. 
The ICL, or the approximate ICL to be precise, is just the BIC penalized for estimated mean entropy and, in the classification framework used herein, it is given by
\begin{equation}
\text{ICL}\approx  \text{BIC} + \sum_{i=m+1}^n\sum_{j=1}^k \text{MAP}\left(\widehat{z}_{ij}\right)\ln \widehat{z}_{ij}
\label{eq:ICL}
\end{equation}
where $\sum_{i=m+1}^n\sum_{j=1}^k \text{MAP}\left(\widehat{z}_{ij}\right)\ln \widehat{z}_{ij}$ is the estimated mean entropy which reflects the uncertainty in the classification of observation $i$ into component $j$.
Therefore, the ICL should be less likely, compared to the BIC, to split one cluster into two mixture components, for example.

\citet[][p.~724]{Bier:Cele:Gova:Asse:2000}, based on numerical experiments, suggest to adopt the BIC and the ICL for indirect and direct applications, respectively. 


\subsection{Adjusted Rand index}
\label{subsec:Adjusted Rand index}

Although the data analyses of Section~\ref{sec:illustrative examples} are mainly conducted as clustering examples, the true classifications are actually known for these data. 
In these examples, the Adjusted Rand Index \citep[ARI;][]{Hube:Arab:Comp:1985} is used to measure class agreement. 
The original Rand Index \citep[RI;][]{Rand:Obje:1971} is based on pairwise comparisons and is obtained by dividing the
number of pair agreements (observations that should be in the same group and are, plus those that should not be in the
same group and are not) by the total number of pairs.
RI assumes values on  $\left[0,1\right]$, where 0 indicates no pairwise agreements between the MAP classification and true group membership and 1 indicates perfect agreement.
One criticism of RI is that its expected value is greater than 0, making smaller values difficult to interpret.
ARI corrects RI for chance by allowing for the possibility that classification performed randomly should correctly classify some observations. 
Thus, ARI has an expected value of 0 and perfect classification would result in a value of 1.

\section{Illustrative examples and considerations}
\label{sec:illustrative examples}

This section begins showing some relations between polynomial Gaussian CWM and related models.
The section continues by looking at two applications on artificial and real data.

Parameters estimation for finite mixtures of polynomial Gaussian regressions is carried out via the \texttt{flexmix} function of the \texttt{R}-package \texttt{flexmix} which, among other, allows to perform EM and CEM algorithms.
If not otherwise stated, the number of repetitions, for the random initialization, will be fixed at $t=10$ for all the considered models.

\subsection{Preliminary notes}
\label{subsec:preliminary considerations}

Before to illustrate the applications of the polynomial Gaussian CWM on real and artificial data sets, it is useful to show some limit cases in a direct application of this model.

Let 
\begin{equation}
p\left(y\left|x;\boldsymbol{\pi},\boldsymbol{\kappa}\right.\right)=\sum_{j=1}^k\pi_j\phi\left(y\left|x;\mu_r\left(x;\boldsymbol{\beta}_j\right),\sigma_{\varepsilon_j}^2\right.\right)
\label{eq:polynomial FMR of Y|X}
\end{equation}
be the (conditional) density of a finite mixture of polynomial Gaussian regressions of $Y$ on $x$.
Moreover, let 
\begin{equation}
p\left(x;\boldsymbol{\pi},\boldsymbol{\xi}\right)=\sum_{j=1}^k\pi_j\phi\left(x;\mu_{X|j},\sigma^2_{X|j}\right)
\label{eq:mixture of X}
\end{equation}
be the (marginal) density of a finite mixture of Gaussian distributions for $X$.
The following propositions gives two sufficient conditions under which the posterior probabilities of component-membership, arising from models \eqref{eq:polynomial FMR of Y|X} and \eqref{eq:mixture of X}, respectively, coincide with those from the polynomial Gaussian CWM.
Similar results for the linear Gaussian and the linear $t$ CWM are given in \citet{Ingr:Mino:Vitt:Loca:2011} and \citet{Ingr:Mino:Punz:Mode:2012}, respectively.
\begin{lem}
Given $k$, $r$, $\boldsymbol{\pi}$ and $\boldsymbol{\kappa}$, if
$\mu_{X|1}=\cdots=\mu_{X|k}=\mu_{X}$ and $\sigma_{X|1}=\cdots=\sigma_{X|k}=\sigma_{X}$, then model \eqref{eq:polynomial CWM} generates the same posterior probabilities of component-membership of model \eqref{eq:polynomial FMR of Y|X}.
\label{lem:polynomial CWM versus polynomial FMR}
\end{lem}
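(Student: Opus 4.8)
The plan is to compute, via Bayes' theorem, the posterior probability of component-membership induced by each of the two models and then compare the resulting expressions directly. For the polynomial Gaussian CWM \eqref{eq:polynomial CWM}, the posterior probability that an observation $\left(x,y\right)'$ comes from component $j$ is
\[
\tau_j^{\text{CWM}}\left(x,y\right)=\frac{\pi_j\phi\left(y\left|x;\mu_r\left(x;\boldsymbol{\beta}_j\right),\sigma_{\varepsilon_j}^2\right.\right)\phi\left(x;\mu_{X|j},\sigma_{X|j}^2\right)}{\displaystyle\sum_{h=1}^k\pi_h\phi\left(y\left|x;\mu_r\left(x;\boldsymbol{\beta}_h\right),\sigma_{\varepsilon_h}^2\right.\right)\phi\left(x;\mu_{X|h},\sigma_{X|h}^2\right)},
\]
whereas for the finite mixture of polynomial Gaussian regressions \eqref{eq:polynomial FMR of Y|X} it is
\[
\tau_j^{\text{FMR}}\left(x,y\right)=\frac{\pi_j\phi\left(y\left|x;\mu_r\left(x;\boldsymbol{\beta}_j\right),\sigma_{\varepsilon_j}^2\right.\right)}{\displaystyle\sum_{h=1}^k\pi_h\phi\left(y\left|x;\mu_r\left(x;\boldsymbol{\beta}_h\right),\sigma_{\varepsilon_h}^2\right.\right)}.
\]

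First I would substitute the hypotheses $\mu_{X|1}=\cdots=\mu_{X|k}=\mu_X$ and $\sigma_{X|1}=\cdots=\sigma_{X|k}=\sigma_X$ into $\tau_j^{\text{CWM}}$. The key observation is that, under these equalities, the marginal factor reduces to the single Gaussian density $\phi\left(x;\mu_X,\sigma_X^2\right)$ for \emph{every} component index $h$, so it no longer depends on $h$. Hence this common factor multiplies the numerator and can be pulled out of the sum in the denominator; cancelling it from numerator and denominator collapses $\tau_j^{\text{CWM}}\left(x,y\right)$ exactly into $\tau_j^{\text{FMR}}\left(x,y\right)$, which is the asserted equality of the posterior probabilities.

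The argument is essentially a one-line cancellation, so no serious obstacle is expected; the only point requiring a word of care is to note that the common factor $\phi\left(x;\mu_X,\sigma_X^2\right)$ is strictly positive (being a Gaussian density), so that the division and cancellation are legitimate for every $\left(x,y\right)'$ and the denominators never vanish. I would close by remarking that, since the two sets of posterior probabilities agree pointwise, the MAP classifications in \eqref{eq:MAP} that they induce coincide as well, which is the practically relevant consequence from the direct point of view stressed in the paper.
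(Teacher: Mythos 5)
Your proposal is correct and follows essentially the same route as the paper's own proof: write the CWM posterior probabilities via Bayes' theorem, observe that under the stated hypotheses the marginal factor $\phi\left(x;\mu_X,\sigma_X^2\right)$ no longer depends on the component index, and cancel it from numerator and denominator to recover the posteriors of the finite mixture of polynomial Gaussian regressions. Your added remarks (strict positivity of the common Gaussian factor justifying the cancellation, the use of a distinct summation index $h$, and the consequence that the induced MAP classifications coincide) are small refinements of the paper's argument, not a different approach.
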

\begin{proof}
Given $k$, $r$, $\boldsymbol{\pi}$ and $\boldsymbol{\kappa}$, if the component marginal densities of $X$ do not depend from $j$, that is if $\mu_{X|1}=\cdots=\mu_{X|k}=\mu_{X}$ and $\sigma_{X|1}=\cdots=\sigma_{X|k}=\sigma_{X}$, then the posterior probabilities of component-membership for the CWM in \eqref{eq:polynomial CWM} can be written as 
\begin{eqnarray*}
P\left(\left.Z_{ij}=1\right|k,r,\boldsymbol{\pi},\boldsymbol{\kappa},\mu_{X},\sigma_{X}\right)&=&\frac{\pi_j\phi\left(y_i\left|x_i;\mu_r\left(x_i;\boldsymbol{\beta}_j\right),\sigma_{\varepsilon_j}^2\right.\right)\phi\left(x_i;\mu_{X},\sigma_{X}^2\right)}{\displaystyle\sum_{j=1}^k\pi_j\phi\left(y_i\left|x_i;\mu_r\left(x_i;\boldsymbol{\beta}_j\right),\sigma_{\varepsilon_j}^2\right.\right)\phi\left(x_i;\mu_{X},\sigma_{X}^2\right)}\\
&=&\frac{\pi_j\phi\left(y_i\left|x_i;\mu_r\left(x_i;\boldsymbol{\beta}_j\right),\sigma_{\varepsilon_j}^2\right.\right)\cancel{\phi\left(x_i;\mu_{X},\sigma_{X}^2\right)}}{\cancel{\phi\left(x_i;\mu_{X},\sigma_{X}^2\right)}\displaystyle\sum_{j=1}^k\pi_j\phi\left(y_i\left|x_i;\mu_r\left(x_i;\boldsymbol{\beta}_j\right),\sigma_{\varepsilon_j}^2\right.\right)}\\
&=&\frac{\pi_j\phi\left(y_i\left|x_i;\mu_r\left(x_i;\boldsymbol{\beta}_j\right),\sigma_{\varepsilon_j}^2\right.\right)}{\displaystyle\sum_{j=1}^k\pi_j\phi\left(y_i\left|x_i;\mu_r\left(x_i;\boldsymbol{\beta}_j\right),\sigma_{\varepsilon_j}^2\right.\right)},\quad \text{$i=1,\ldots,n$, $j=1,\ldots,k$,}
\end{eqnarray*}
which coincide with the posterior probabilities associated to the model in \eqref{eq:polynomial FMR of Y|X}.
\end{proof} 
\begin{lem}
Given $k$, $r$, $\boldsymbol{\pi}$ and $\boldsymbol{\xi}$, if
$\boldsymbol{\beta}_1=\cdots=\boldsymbol{\beta}_k=\boldsymbol{\beta}$ and $\sigma_{\varepsilon_1}=\cdots=\sigma_{\varepsilon_k}=\sigma_{\varepsilon}$, then model \eqref{eq:polynomial CWM} generates the same posterior probabilities of component-membership of model \eqref{eq:mixture of X}.
\label{lem:polynomial CWM versus mixtures of distributions}
\end{lem}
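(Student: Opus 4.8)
The plan is to mirror the proof of Proposition~\ref{lem:polynomial CWM versus polynomial FMR}, this time cancelling the \emph{conditional} factor rather than the marginal one in the component factorization \eqref{eq:component density}. First I would write down the posterior probability of component-membership induced by the CWM \eqref{eq:polynomial CWM}, substituting the hypotheses $\boldsymbol{\beta}_1=\cdots=\boldsymbol{\beta}_k=\boldsymbol{\beta}$ and $\sigma_{\varepsilon_1}=\cdots=\sigma_{\varepsilon_k}=\sigma_{\varepsilon}$ into each conditional density, namely
\begin{equation*}
P\left(\left.Z_{ij}=1\right|k,r,\boldsymbol{\pi},\boldsymbol{\xi},\boldsymbol{\beta},\sigma_{\varepsilon}\right)=\frac{\pi_j\phi\left(y_i\left|x_i;\mu_r\left(x_i;\boldsymbol{\beta}\right),\sigma_{\varepsilon}^2\right.\right)\phi\left(x_i;\mu_{X|j},\sigma_{X|j}^2\right)}{\displaystyle\sum_{j=1}^k\pi_j\phi\left(y_i\left|x_i;\mu_r\left(x_i;\boldsymbol{\beta}\right),\sigma_{\varepsilon}^2\right.\right)\phi\left(x_i;\mu_{X|j},\sigma_{X|j}^2\right)}.
\end{equation*}

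The key observation is that, under these hypotheses, the conditional factor $\phi\left(y_i\left|x_i;\mu_r\left(x_i;\boldsymbol{\beta}\right),\sigma_{\varepsilon}^2\right.\right)$ no longer depends on the component index $j$: the polynomial mean $\mu_r\left(x_i;\boldsymbol{\beta}_j\right)$ enters only through $\boldsymbol{\beta}_j$ and the error variance only through $\sigma_{\varepsilon_j}^2$, both of which are now common across components. Consequently this factor can be pulled out of the summation in the denominator and cancelled against the identical factor in the numerator, exactly as $\phi\left(x_i;\mu_{X},\sigma_{X}^2\right)$ was cancelled in the proof of Proposition~\ref{lem:polynomial CWM versus polynomial FMR}. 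What remains is
\begin{equation*}
\frac{\pi_j\phi\left(x_i;\mu_{X|j},\sigma_{X|j}^2\right)}{\displaystyle\sum_{j=1}^k\pi_j\phi\left(x_i;\mu_{X|j},\sigma_{X|j}^2\right)},\quad i=1,\ldots,n,\ j=1,\ldots,k,
\end{equation*}
which is precisely the posterior probability of component-membership associated with the finite mixture of Gaussian distributions for $X$ in \eqref{eq:mixture of X}, completing the argument.

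I do not expect any genuine obstacle here: the result is the exact dual of Proposition~\ref{lem:polynomial CWM versus polynomial FMR}, obtained by interchanging the roles of the conditional and marginal factors in the factorization \eqref{eq:component density}. The only point deserving explicit mention is the verification that the conditional density is constant in $j$ under the stated equalities, which is immediate; the cancellation and the identification of the surviving expression with \eqref{eq:mixture of X} are then routine.
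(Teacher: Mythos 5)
Your proposal is correct and follows essentially the same route as the paper's own proof: write the CWM posterior under the stated equalities, observe that the conditional factor $\phi\left(y_i\left|x_i;\mu_r\left(x_i;\boldsymbol{\beta}\right),\sigma_{\varepsilon}^2\right.\right)$ is constant in $j$, cancel it between numerator and denominator, and identify the remainder with the posterior of model \eqref{eq:mixture of X}. Nothing is missing; the paper's proof is exactly this cancellation argument, dual to that of Proposition~\ref{lem:polynomial CWM versus polynomial FMR}.
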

\begin{proof}
Given $k$, $r$, $\boldsymbol{\pi}$ and $\boldsymbol{\xi}$, if the component regression models does not depend from $j$, that is if $\boldsymbol{\beta}_1=\cdots=\boldsymbol{\beta}_k=\boldsymbol{\beta}$ and $\sigma_{\varepsilon_1}=\cdots=\sigma_{\varepsilon_k}=\sigma_{\varepsilon}$, then the posterior probabilities of component-membership for model \eqref{eq:polynomial CWM} can be written as 
\begin{eqnarray*}
P\left(\left.Z_{ij}=1\right|k,r,\boldsymbol{\pi},\boldsymbol{\xi},\boldsymbol{\beta},\sigma_{\varepsilon}\right)&=&\frac{\pi_j\phi\left(y_i\left|x_i;\mu_r\left(x_i;\boldsymbol{\beta}\right),\sigma_{\varepsilon}^2\right.\right)\phi\left(x_i;\mu_{X|j},\sigma_{X|j}^2\right)}{\displaystyle\sum_{j=1}^k\pi_j\phi\left(y_i\left|x_i;\mu_r\left(x_i;\boldsymbol{\beta}\right),\sigma_{\varepsilon}^2\right.\right)\phi\left(x_i;\mu_{X|j},\sigma_{X|j}^2\right)}\\
&=&\frac{\pi_j\cancel{\phi\left(y_i\left|x_i;\mu_r\left(x_i;\boldsymbol{\beta}\right),\sigma_{\varepsilon}^2\right.\right)}\phi\left(x_i;\mu_{X|j},\sigma_{X|j}^2\right)}{\cancel{\phi\left(y_i\left|x_i;\mu_r\left(x_i;\boldsymbol{\beta}\right),\sigma_{\varepsilon}^2\right.\right)}\displaystyle\sum_{j=1}^k\pi_j\phi\left(x_i;\mu_{X|j},\sigma_{X|j}^2\right)}\\
&=&\frac{\pi_j\phi\left(x_i;\mu_{X|j},\sigma_{X|j}^2\right)}{\displaystyle\sum_{j=1}^k\pi_j\phi\left(x_i;\mu_{X|j},\sigma_{X|j}^2\right)},\quad \text{$i=1,\ldots,n$, $j=1,\ldots,k$,}
\end{eqnarray*}
which coincide with the posterior probabilities of the model in \eqref{eq:mixture of X}.
\end{proof}

\subsection{Artificial data}
\label{subsec:Artificial data}

An artificial data set is here generated by a polynomial Gaussian CWM.
One of the aims is to highlight a situation in which a finite mixture of polynomial Gaussian regressions provides a wrong classification although the underlying groups are both well-separated and characterized by a polynomial Gaussian relationship of $Y$ on $x$.

The data consist of $n=700$ bivariate observations randomly generated from a cubic ($r=3$) Gaussian CWM with $k=2$ groups having sizes $n_1=400$ and $n_2=300$.
\tablename~\ref{tab:artificial real parameters} reports the parameters of the generating model.
\begin{table}[!ht]
\caption{
Parameters of the cubic Gaussian CWM ($k=2$) used to generate the data.
\label{tab:artificial real parameters}
}
\centering
\subtable[Regression parameters]{
\label{tab:artificial real beta}
\begin{tabular}{crr}
\toprule
	                       &	Component $j=1$	&	Component $j=2$	 \\
\midrule															
$\beta_{0j}$              &     0.000	      &	-8.000	       \\
$\beta_{1j}$              &    -1.000	      &	 0.100	       \\
$\beta_{2j}$              &     0.000	      &	-0.100	       \\
$\beta_{3j}$              &     0.100	      &	 0.150	       \\
$\sigma_{\varepsilon_j}$  &	  1.600    	   &	 2.300	       \\
\bottomrule	
\end{tabular}
}
\qquad
\subtable[Other parameters]{
\label{tab:artificial real other}
\begin{tabular}{crr}
\toprule
	                       &	Component $j=1$	&	Component $j=2$	 \\
\midrule															
$\pi_j$	                 &    0.571	      &	0.429	       \\
$\mu_{X|j}$               &	-2.000     	&	3.800	       \\
$\sigma_{X|j}$            &	 1.000     	&	0.700	       \\
\bottomrule	
\end{tabular}
}
\end{table}
Simulated data are displayed in \figurename~\ref{fig:artificial-real-CW-plot} by what, from now on, will be simply named as CW-plot.   
\begin{figure}[!ht]
\centering
\resizebox{0.75\textwidth}{!}{
\includegraphics{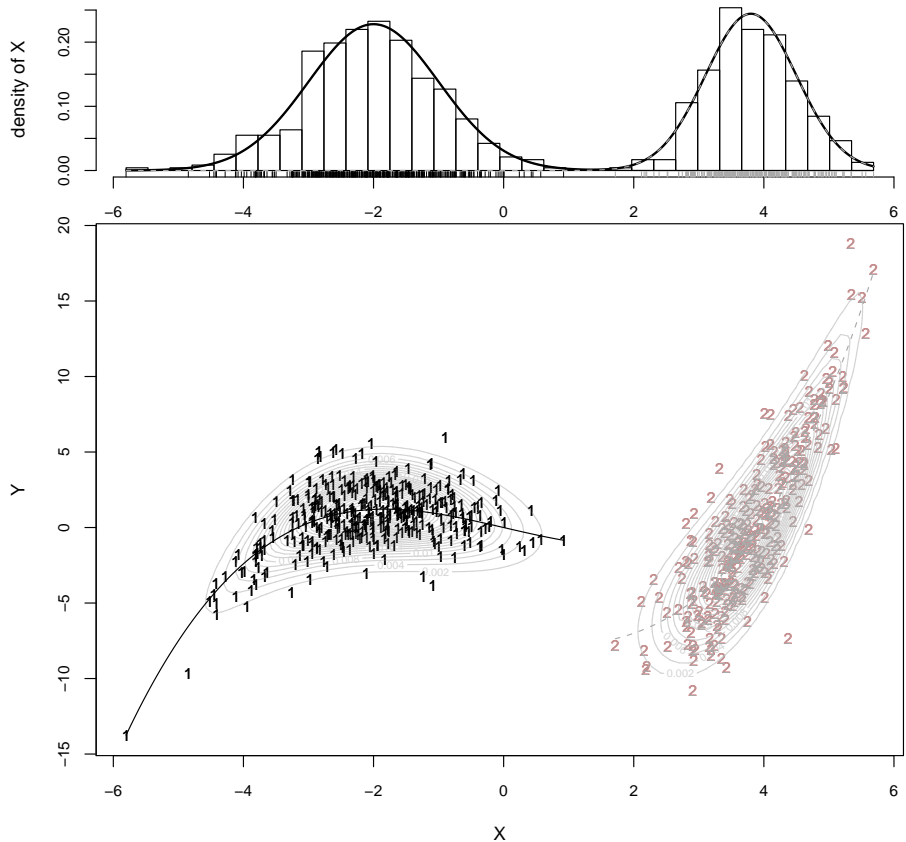} 
}
\caption{
CW-plot of the simulated data from a cubic Gaussian CWM ($n=700$ and $k=2$).
}
\label{fig:artificial-real-CW-plot}
\end{figure}
It allows to visualize the joint distribution of $\left(X,Y\right)'$, the regression of $Y$ on $x$, and the marginal distribution of $X$; as stressed from the beginning, these represent the key elements on which a CWM is based.  
A gray scale, and different line types (solid and dashed in this case), allow to distinguish the underlying groups in \figurename~\ref{fig:artificial-real-CW-plot}.
The top of \figurename~\ref{fig:artificial-real-CW-plot} is dedicated to the marginal distribution of $X$.
Here, we have an histogram of the simulated data on which are superimposed the component univariate Gaussian densities, multiplied by the corresponding weights $\pi_1$ and $\pi_2$, and the resulting mixture.
The scatter plot of the data is displayed at the bottom of \figurename~\ref{fig:artificial-real-CW-plot}.
The observations of the two groups are here differentiated by the labels $\mathsf{1}$ and $\mathsf{2}$ and by the different gray scales.
The true cubic Gaussian regressions are also separately represented.
The underlying (generating) joint density is also visualized via isodensities; its 3D representation is displayed in \figurename~\ref{fig:artificial-real-density}. 
\begin{figure}[!ht]
\centering
\resizebox{0.8\textwidth}{!}{
\includegraphics{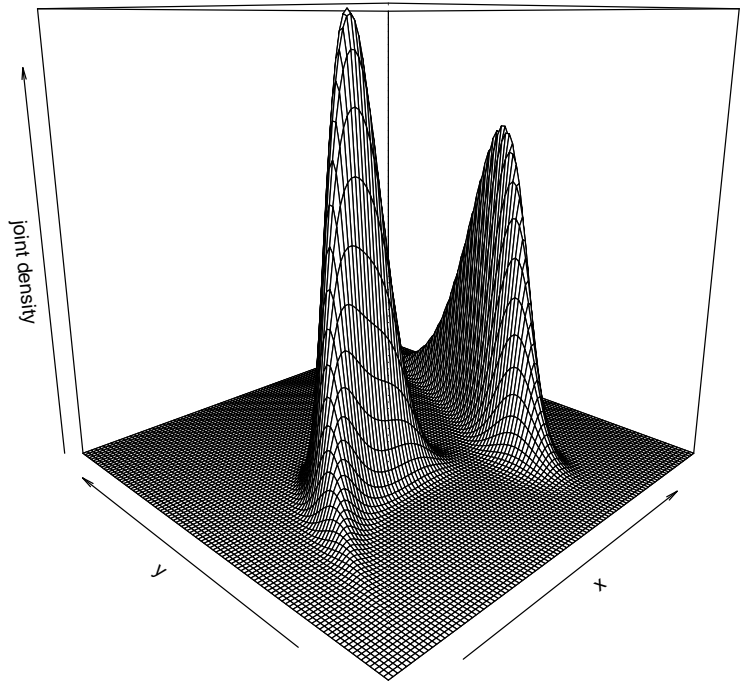} 
}
\caption{
Underlying joint density of the simulated data.
}
\label{fig:artificial-real-density}
\end{figure}

Now, we will suppose to forget the true classifications $\widetilde{\boldsymbol{z}}_i$, $i=1,\ldots,n$, and we will evaluate the performance of a finite mixture of cubic Gaussian regressions on these data.
To make \texttt{flexmix} in the best conditions, we have used the true classification as starting point in the EM algorithm and we have considered the true values of $k$ and $r$.
Nevertheless, without going into details about the estimated parameters, the clustering results are very bad, as confirmed by the scatter plot in \figurename~\ref{fig:artificial-FMRscatter} and by a very low value of the adjusted Rand Index ($\text{ARI}=0.088$).
\begin{figure}[!ht]
\centering
\resizebox{0.68\textwidth}{!}{
\includegraphics{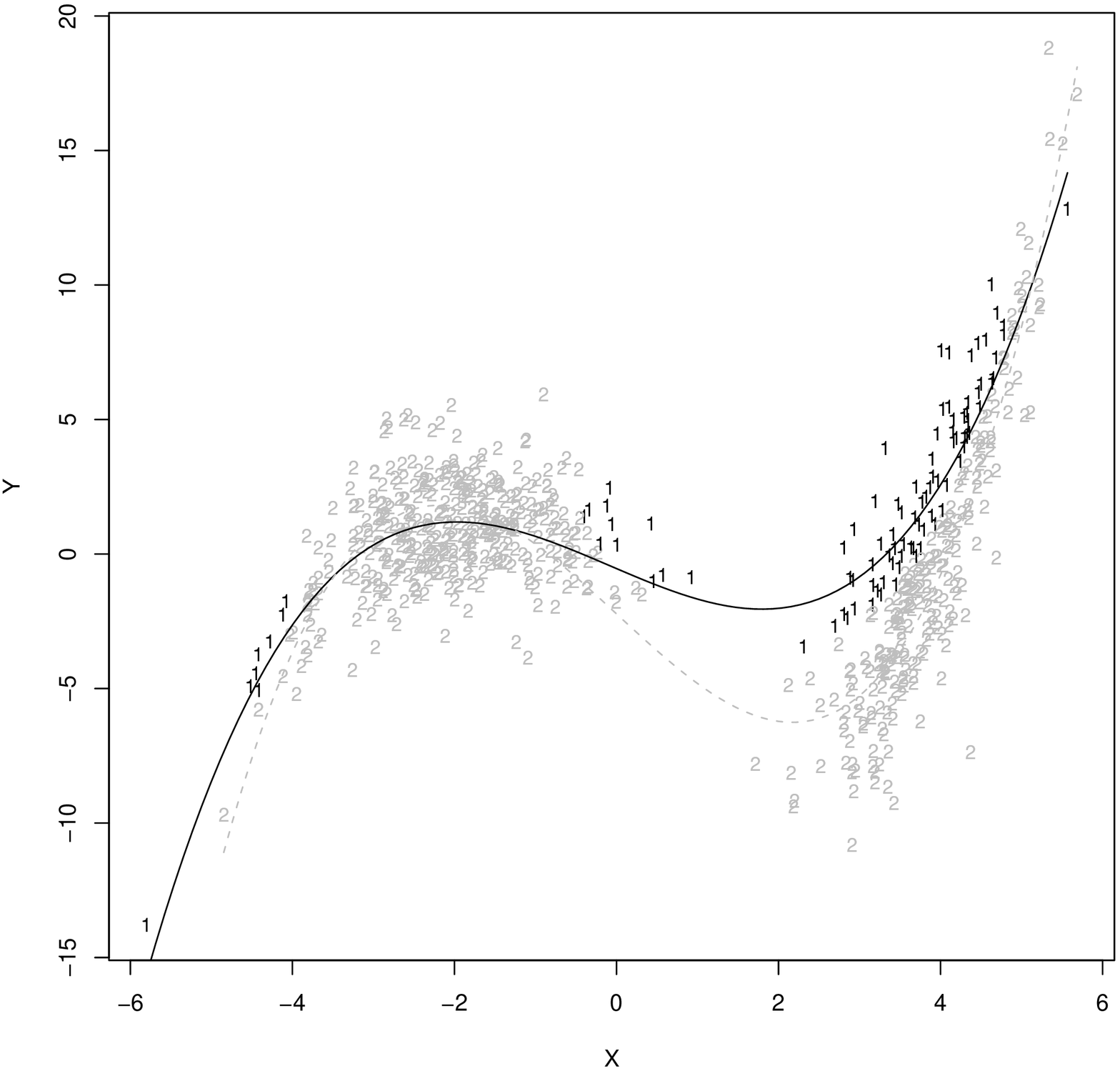} 
}
\caption{
Scatter plot of the artificial data with labels, and curves, arising from the ML-estimation (with the EM algorithm) of a finite mixture of $k=2$ cubic Gaussian regressions.
Plotting symbol and color for each observation is determined by the component with the maximum \textit{a posteriori} probability.
}
\label{fig:artificial-FMRscatter}
\end{figure}

On the contrary, as we shall see in a short time, the results obtained with the polynomial Gaussian CWM are optimal.
In particular, differently from the previous model, in this case we preliminarily estimate the number of mixture components $k$ and the polynomial degree $r$ according to what suggested in Section~\ref{subsec:The Bayesian information criterion and the integrated completed likelihood}.
\tablename~\ref{tab:likelihood-based information criteria} reports the values of $2l\left(\widehat{\boldsymbol{\psi}}\right)$, BIC, and ICL, obtained by using the EM algorithm, for a large enough number of couples $\left(k,r\right)$, $k=1,\ldots,5$ and $r=1,\ldots,5$. 
\begin{table}[!ht]
\caption{
Values of $2l\left(\widehat{\boldsymbol{\psi}}\right)$, BIC and ICL for a polynomial Gaussian CWM.
Different number of mixture components ($k=1,\ldots,5$) and of polynomial degrees ($r=1,\ldots,5$) are considered.
Bold numbers highlight the best values for BIC and ICL.
\label{tab:likelihood-based information criteria}
}
\centering
\subtable[$2l\left(\widehat{\boldsymbol{\psi}}\right)$]{\label{tab:loglik}
\begin{tabular}{cccccc}
\toprule
\backslashbox{$k$}{$r$}     	&	1	&	2	& 3  &	4	&	5 \\
\midrule															
1 & -7356.441 & -7317.690 & -6582.602 & -6552.363 & -6546.713 \\ 
2 & -5855.160 & -5653.833 & -5636.396 & -5636.224 & -5633.650 \\ 
3 & -5713.095 & -5634.769 & -5629.208 & -5584.800 & -5584.004 \\ 
4 & -5672.787 & -5618.283 & -5599.250 & -5549.372 & -5549.227 \\ 
5 & -5661.108 & -5616.798 & -5579.644 & -5548.913 & -5546.247 \\ 
\bottomrule	
\end{tabular}
}
\subtable[BIC]{
\label{tab:BIC}
\begin{tabular}{cccccc}
\toprule
\backslashbox{$k$}{$r$}     	&	1	&	2	& 3  &	4	&	5 \\
\midrule															
1 & -7382.645 & -7350.446 & -6621.908 & -6598.220 & -6599.122 \\ 
2 & -5914.120 & -5725.895 & \textbf{-5721.560} & -5734.490 & -5745.019 \\ 
3 & -5804.810 & -5746.137 & -5760.229 & -5735.475 & -5754.332 \\ 
4 & -5797.258 & -5768.958 & -5776.129 & -5752.455 & -5778.515 \\ 
5 & -5818.334 & -5806.779 & -5802.380 & -5804.405 & -5834.495 \\ 
\bottomrule	
\end{tabular}
}
\subtable[ICL]{
\label{tab:ICL}
\begin{tabular}{cccccc}
\toprule
\backslashbox{$k$}{$r$}     	&	1	&	2	& 3  &	4	&	5 \\
\midrule															
1 & -7382.645 & -7350.446 & -6621.908 & -6598.220 & -6599.122 \\ 
2 & -5914.120 & -5726.044 & \textbf{-5721.563} & -5734.492 & -5745.198 \\ 
3 & -5847.224 & -5747.712 & -5873.706 & -5740.202 & -5754.593 \\ 
4 & -5906.636 & -5914.615 & -5783.383 & -5759.843 & -5790.514 \\ 
5 & -5923.268 & -5967.415 & -5851.145 & -5965.444 & -5876.014 \\ 
\bottomrule	
\end{tabular}
}
\end{table}
Bold numbers highlight the best model according to each model selection criterion.
It is interesting to note as BIC and ICL select the same (true) model characterized by $k=2$ and $r=3$. 
For this model, \tablename~\ref{tab:artificial-EM-parameters} shows the ML estimated parameters (and their standard errors in round brackets)
while \figurename~\ref{fig:artificial-CWMscatter} displays the resulting scatter plot. 
Visibly, optimal results in terms of fit and clustering are obtained, as also corroborated by the value $\text{ARI}=1$.  
\begin{table}[!ht]
\caption{
Maximum likelihood estimated parameters, obtained with the EM algorithm, for a cubic Gaussian CWM ($k=2$).
Standard Errors are displayed in round brackets.
\label{tab:artificial-EM-parameters}
}
\centering
\subtable[Regression parameters]{\label{tab:artificial-EM-beta}
\begin{tabular}{crr}
\toprule
	                       &	Component $j=1$	&	Component $j=2$	 \\
\midrule															
$\widehat{\beta}_{0j}$              &    \begin{tabular}{c} 0.030\\ (0.270) \end{tabular}	      &	     	\begin{tabular}{c} -8.042\\ (10.391) \end{tabular}	    \\[4mm]
$\widehat{\beta}_{1j}$              &    \begin{tabular}{c} -1.004\\ (0.398) \end{tabular}	      &	     	\begin{tabular}{c} -0.090\\ (8.558) \end{tabular}	    \\[4mm]
$\widehat{\beta}_{2j}$              &    \begin{tabular}{c} 0.041\\ (0.194) \end{tabular}	      &	     	\begin{tabular}{c} -0.038\\ (2.304) \end{tabular}	    \\[4mm]
$\widehat{\beta}_{3j}$              &    \begin{tabular}{c} 0.114\\ (0.028) \end{tabular}	      &	     	\begin{tabular}{c} 0.147\\ (0.202) \end{tabular}	    \\[4mm]
$\widehat{\sigma}_{\varepsilon_j}$  &    \begin{tabular}{c} 1.586\\ (0.056) \end{tabular}	      &	     	\begin{tabular}{c} 2.502\\ (0.102) \end{tabular}	    \\
\bottomrule	
\end{tabular}
}
\subtable[Other parameters]{\label{tab:artificial-EM-other}
\begin{tabular}{crr}
\toprule
	                       &	Component $j=1$	&	Component $j=2$	 \\
\midrule															
$\widehat{\pi}_j$	                 &    \begin{tabular}{c} 0.571\\ (0.028) \end{tabular}	      &	     	\begin{tabular}{c} 0.429\\ (0.024) \end{tabular}	    \\[4mm]
$\widehat{\mu}_{X|j}$               &    \begin{tabular}{c} -2.046\\ (0.051) \end{tabular}	      &	     	\begin{tabular}{c} 3.814\\ (0.039) \end{tabular}	    \\[4mm]
$\widehat{\sigma}_{X|j}$            &    \begin{tabular}{c} 1.022\\ (0.036) \end{tabular}	      &	     	\begin{tabular}{c} 0.682\\ (0.027) \end{tabular}	    \\
\bottomrule	
\end{tabular}
}
\end{table}
\begin{figure}[!ht]
\centering
\resizebox{0.68\textwidth}{!}{
\includegraphics{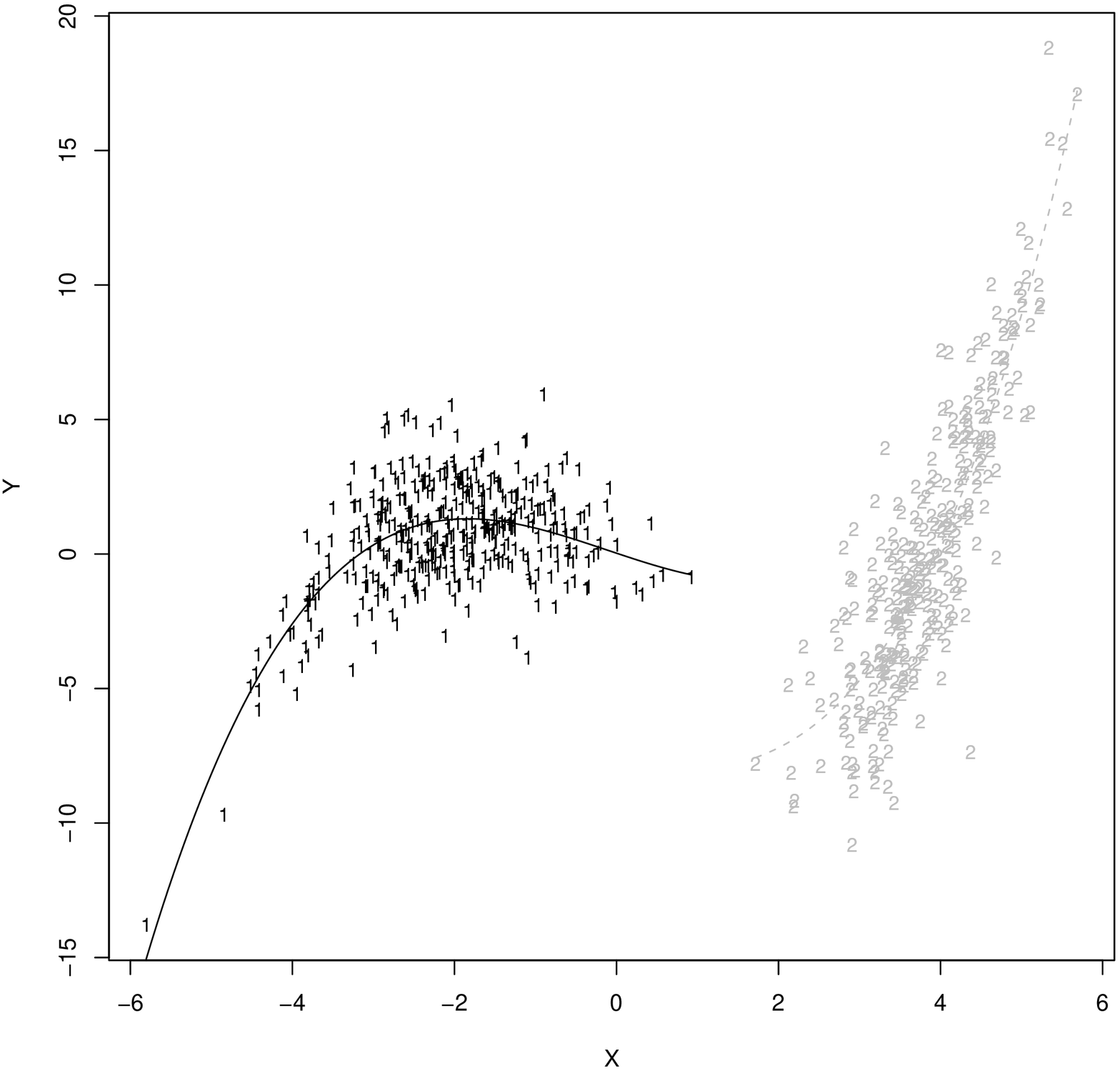} 
}
\caption{
Scatter plot of the artificial data with labels, and curves, arising from the ML-estimation (with the EM algorithm) of a cubic Gaussian CWM with $k=2$.
}
\label{fig:artificial-CWMscatter}
\end{figure}

Other experiments, whose results are not reported here for brevity's sake, have shown that a finite mixture of polynomial Gaussian regressions is not able to find the underlying group-structure when it also affects the marginal distribution of $X$.
These results generalize to the case $r>1$ the considerations that \cite{Ingr:Mino:Vitt:Loca:2011} make with reference to the linear Gaussian CWM in comparison with finite mixtures of linear Gaussian regressions.

\subsection{Real data}
\label{subsec:Real data}

The ``places'' data from the \textit{Places Rated Almanac} \citep{Boye:Sava:Plac:1985} are a collection of nine composite variables constructed for $n=329$ metropolitan areas of the United States in order to measure the quality of life.
There are $k=2$ groups of places, small (group 1) and large (group 2), with the first of size 303.
For the current purpose, we only use the two variables $X=$``health care and environment'' and $Y=$``arts and cultural facilities'' measured so that the higher the score, the better.
These are the two variables having the highest correlation \citep[][]{Kope:Hoff:Gene:1992}. 
\figurename~\ref{fig:places-real scatter} displays the scatter plot of $X$ versus $Y$ in both groups.
\begin{figure}[!ht]
\centering
\resizebox{0.68\textwidth}{!}{
\includegraphics{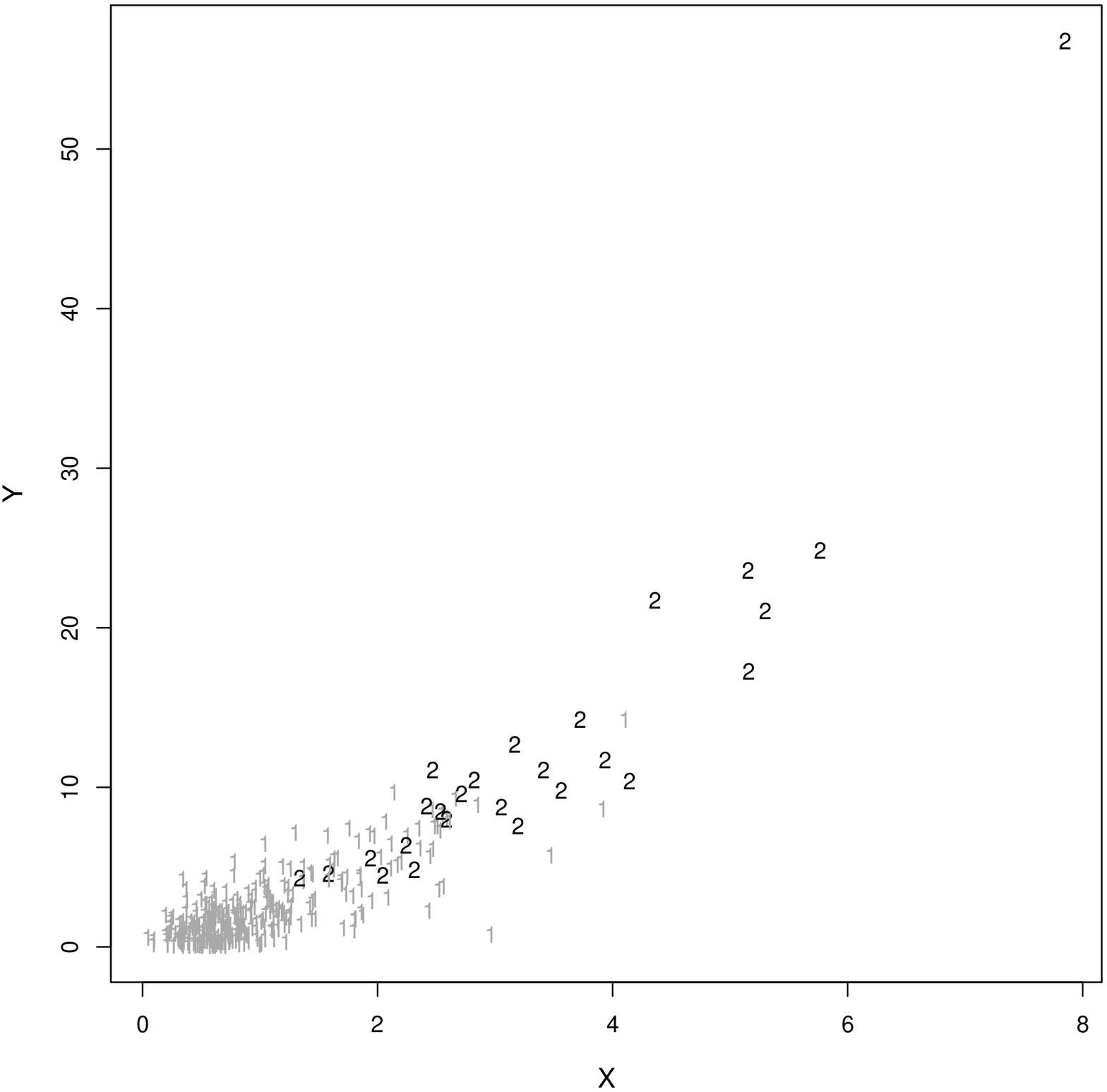} 
}
\caption{
Scatter plot of $X=$``health care and environment'' and $Y=$``arts and cultural facilities'' for 303 small (denoted with $\mathsf{1}$) and 26 large (denoted with $\mathsf{2}$) metropolitan areas of the United States.
Data from \citet{Boye:Sava:Plac:1985}.
}
\label{fig:places-real scatter}
\end{figure}
In view of making clustering, the situation seems more complicated than the previous one due to a prominent overlapping between groups; this is an aspect that have to be taken into account in evaluating the quality of the clustering results. 
Furthermore, it is possible to see a clear parabolic functional relationship of $Y$ on $x$ in group 2.
Regarding group 1, \tablename~\ref{tab:places-parabolic group 1} shows the summary results of a polynomial regression ($r=5$) as provided by the \texttt{lm} function of the \texttt{R}-package \texttt{stats}.
\begin{table}[!ht]
\caption{
Estimated parameters, and corresponding summary statistics, of a polynomial regression (of degree $r=5$) fitted on group 1 via the \texttt{R}-function \texttt{lm}. 
\label{tab:places-parabolic group 1}
}
\centering
\begin{tabular}{crrrr}
\toprule
	             &	Estimate	  & Std. error  & $t$-value & $p$-value	 \\
\midrule															
$\beta_{01}$    &  0.256     & 0.705       & 0.363     & 0.71687      \\
$\beta_{11}$    &  4.024     & 3.215       & 1.251     & 0.21176      \\
$\beta_{21}$    &  -7.460    & 5.033       & -1.482    & 0.13937      \\
$\beta_{31}$    &  7.580     & 3.373       & 2.247     & 0.02537      \\
$\beta_{41}$    &  -2.762    & 0.993       & -2.782    & 0.00574      \\
$\beta_{51}$    &  0.328     & 0.104       & 3.141     & 0.00185      \\
\bottomrule	
\end{tabular}
\end{table}
At a (common) nominal level of 0.05, the only significant parameters appear to be those related to the parabolic function.
   
Now, we will suppose to forget the true classification in small and large places and we will try to estimate it by directly considering the case $k=2$.
\figurename~\ref{fig:Lik-based-functions} displays the values of $2l\left(\widehat{\boldsymbol{\psi}}\right)$, BIC, and ICL for the polynomial Gaussian CWM in correspondence of $r$ ranging from 1 to 8. 
\begin{figure}[!ht]
\centering
\resizebox{0.54\textwidth}{!}{
\includegraphics{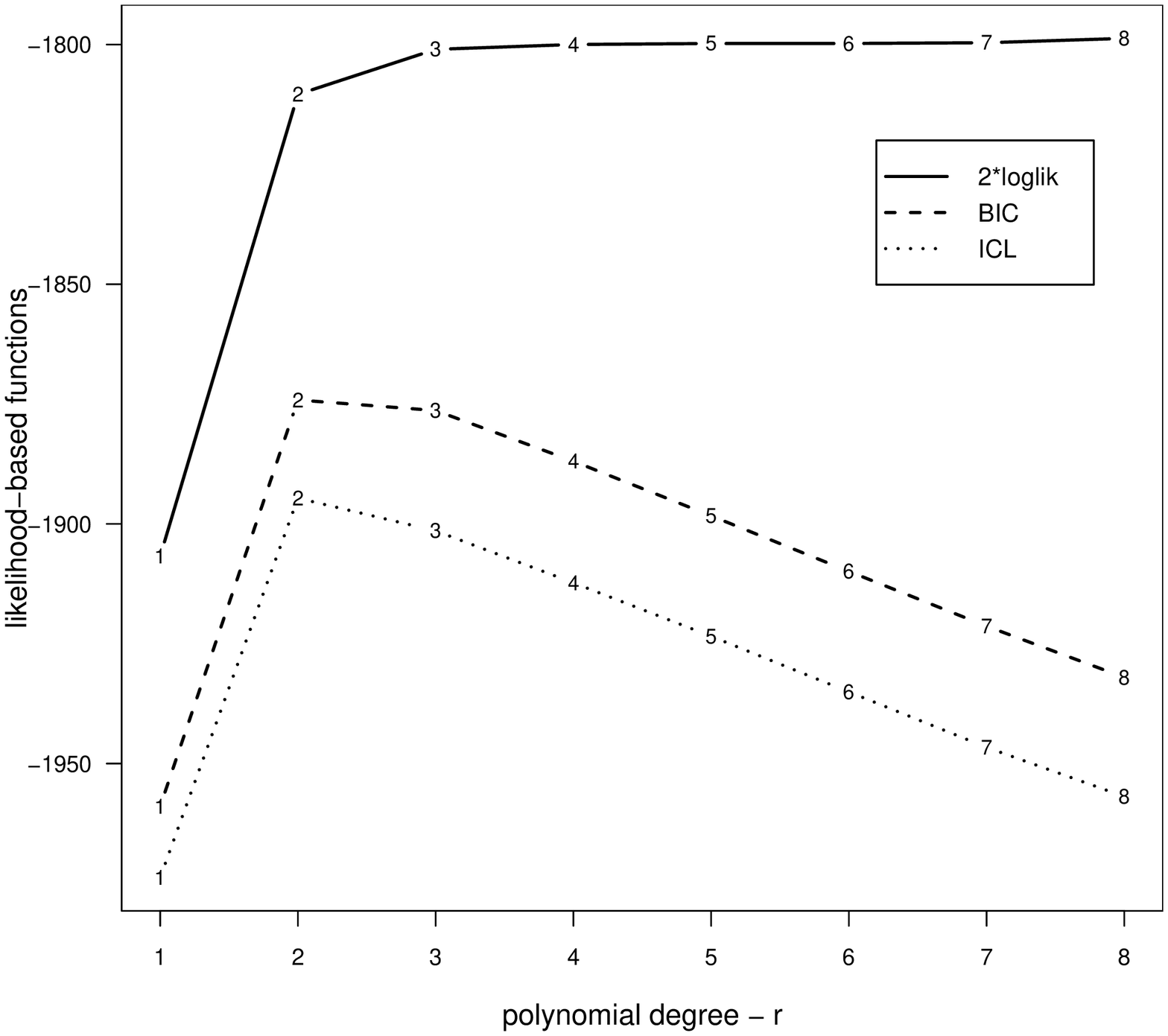} 
}
\caption{
Values of $2l\left(\widehat{\boldsymbol{\psi}}\right)$, BIC and ICL, in correspondence of $r=1,\ldots,8$, in the polynomial Gaussian CWM. 
}
\label{fig:Lik-based-functions}
\end{figure}
The best polynomial degree, according to the BIC (-1874.148) and the ICL (-1894.629), is $r=2$.
This result corroborates the above considerations.
\tablename~\ref{tab:places-EM-parameters} summarizes the parameter estimates, and the corresponding standard errors, for the quadratic Gaussian CWM with $k=2$.
\begin{table}[!ht]
\caption{
Parameters of a quadratic Gaussian CWM estimated with the EM algorithm ($k=2$).
Standard errors are displayed in round brackets.
\label{tab:places-EM-parameters}
}
\centering
\subtable[Regression parameters]{\label{tab:places-CEM-beta}
\begin{tabular}{crr}
\toprule
	                       &	Component $j=1$	&	Component $j=2$	 \\
\midrule															
$\widehat{\beta}_{0j}$              &    \begin{tabular}{c} 0.677\\ (0.349) \end{tabular}	      &	     	\begin{tabular}{c} 4.503\\ (0.877) \end{tabular}	    \\[4mm]
$\widehat{\beta}_{1j}$              &    \begin{tabular}{c} 0.191\\ (1.081) \end{tabular}	      &	     	\begin{tabular}{c} -1.821\\ (0.592) \end{tabular}	    \\[4mm]
$\widehat{\beta}_{2j}$              &    \begin{tabular}{c} 0.892\\ (0.787) \end{tabular}	      &	     	\begin{tabular}{c} 1.019\\ (0.084) \end{tabular}	    \\[4mm]
$\widehat{\sigma}_{\varepsilon_j}$  &    \begin{tabular}{c} 0.866\\ (0.058) \end{tabular}	      &	     	\begin{tabular}{c} 2.220\\ (0.159) \end{tabular}	    \\
\bottomrule	
\end{tabular}
}
\quad
\subtable[Other parameters]{\label{tab:places-CEM-other}
\begin{tabular}{crr}
\toprule
	                       &	Component $j=1$	&	Component $j=2$	 \\
\midrule															
$\widehat{\pi}_j$	                 &    \begin{tabular}{c} 0.662\\ (0.050) \end{tabular}	      &	     	\begin{tabular}{c} 0.338\\ (0.039) \end{tabular}	    \\[4mm]
$\widehat{\mu}_{X|j}$               &    \begin{tabular}{c} 0.699\\ (0.024) \end{tabular}	      &	     	\begin{tabular}{c} 2.138\\ (0.136) \end{tabular}	    \\[4mm]
$\widehat{\sigma}_{X|j}$            &    \begin{tabular}{c} 0.294\\ (0.019) \end{tabular}	      &	     	\begin{tabular}{c} 1.194\\ (0.082) \end{tabular}	    \\
\bottomrule	
\end{tabular}
}
\end{table}
The CW-plot is displayed in \figurename~\ref{fig:places-CWMEMscatter}. 
\begin{figure}[!ht]
\centering
\resizebox{0.75\textwidth}{!}{
\includegraphics{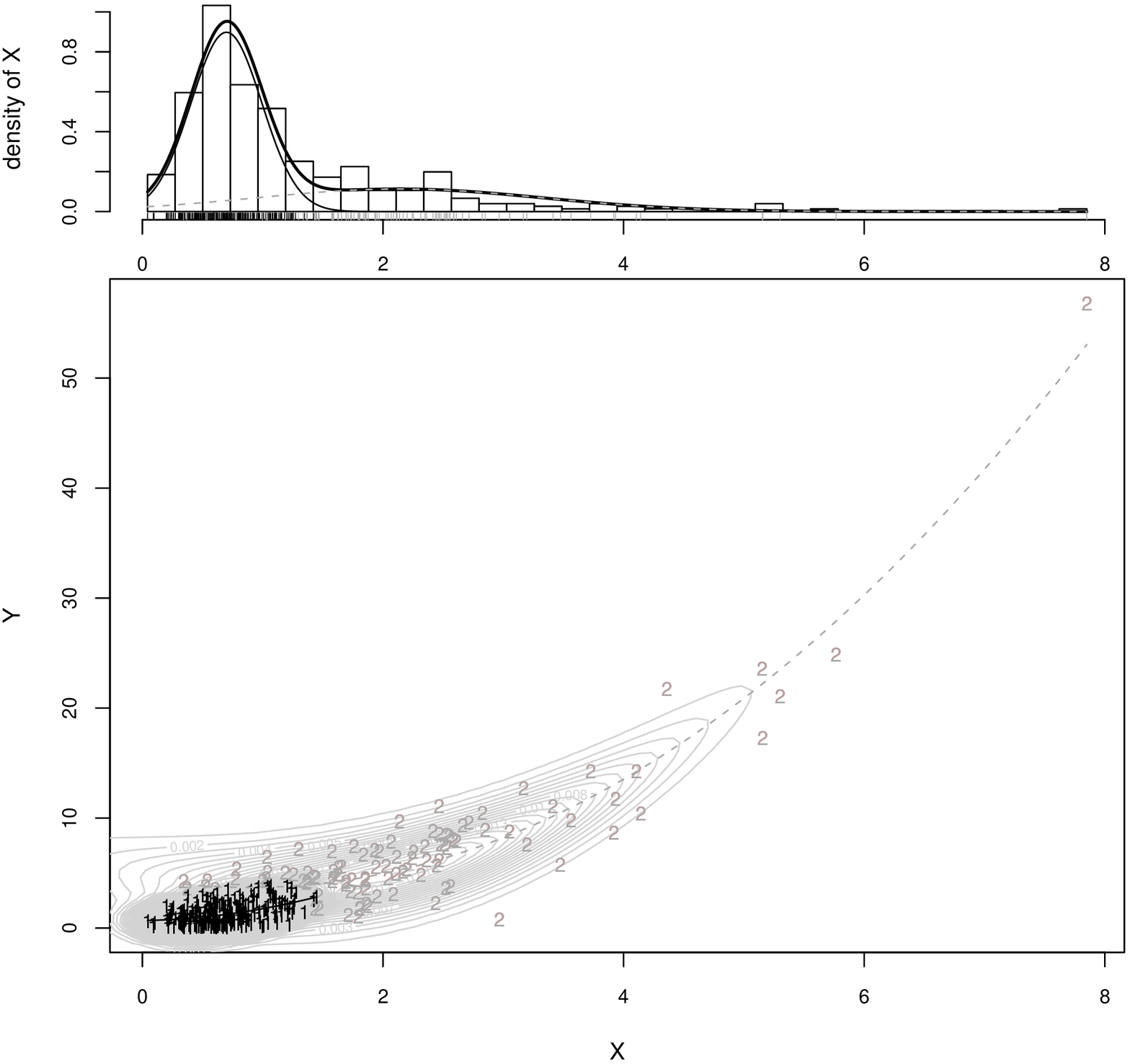} 
}
\caption{
CW-plot of the quadratic Gaussian CWM fitted, via the EM algorithm, on the ``places'' data ($k=2$).
Plotting symbol and color for each observation is determined by the component with the maximum \textit{a posteriori} probability.
}
\label{fig:places-CWMEMscatter}
\end{figure}
The ARI results to be 0.208; the corresponding value for a finite mixture of $k=2$ quadratic Gaussian regressions is 0.146.
Note that, the ARI increases up to 0.235 if the quadratic Gaussian CWM is fitted via the CEM algorithm.  

\subsubsection{Classification evaluation}
\label{subsubsec:Classification evaluation}

Now, suppose to be interested in evaluating the impact of possible $m$ labeled data, with $m<n$, on the classification of the remaining $n-m$ unlabeled observations.
With this aim, using the same data, we have performed a simple simulation study with the following scheme.
For each $m$ ranging from 1 to 250 (250/329=0.760), we have randomly generated 500 vectors, of size $m$, with elements indicating the observations to consider as labeled (using the true labels for them).
The discrete uniform distribution, taking values on the set $\left\{1,\ldots,n\right\}$, was used as generating model.
In each of the 500 replications, the quadratic Gaussian CWM (with $k=2$) was fitted, with the EM algorithm, to classify the $n-m$ unlabeled observations.
\figurename~\ref{fig:places - ARI} shows the average ARI values, computed across the 500 replications only on the $n-m$ unlabeled observations, for each considered value of $m$.
\begin{figure}[!ht]
\centering
\resizebox{0.54\textwidth}{!}{
\includegraphics{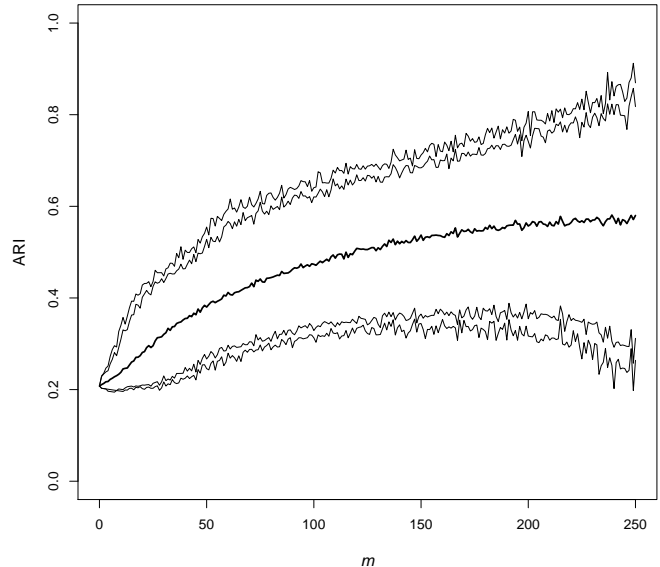} 
}
\caption{
Average ARI values, with respect to 500 replications, for each considered value of $m$ (central line with the highest width).
The simulated quantiles of probability 0.025, 0.05, 0.95 and 0.975, are also superimposed.   
}
\label{fig:places - ARI}
\end{figure}
To give an idea of the conditional variability of the obtained results, quantiles of probability 0.025, 0.05, 0.95 and 0.975, are also superimposed on the plot.
As expected, the knowledge of the true labels for $m$ observations tends (on average, as measured by the ARI) to improve the quality of the classification for the remaining observations when $m$ increases.
  
\section{Discussion and future work}
\label{sec:conclusions}

An extension of the linear Gaussian CWM has been introduced which allows to model possible nonlinear relationships in each mixture component through a polynomial function. 
This model, named polynomial Gaussian CWM, was justified on the ground of density estimation, model-based clustering, and model-based classification.
Parameter estimation was carried out within the EM algorithm framework, and the BIC and the ICL were used for model selection.
Theoretical arguments were also given showing as the posterior probabilities of component-membership arising from some well-known mixture models can be obtained by conveniently constrained the parameters of the polynomial Gaussian CWM.  
The proposed model was then applied on a real and an artificial data set. 
Here, excellent clustering/classification performance was achieved when compared with existing model-based clustering techniques.

Future work will follow several avenues.
To begin, extension of the polynomial Gaussian CWM, to more than two variables, will be considered.
This extension could initially concern only the $X$ variable and then involve also the $Y$ one.
Afterwards, the polynomial $t$ CWM could be introduced by simply substituting the Gaussian density with a Student-$t$ distribution providing more robust inference for data characterized by noise and outliers \citep{Lang:Litt:Tayl:Robu:1989}.
Finally, it could be interesting to evaluate, theoretically or by simulation, the most convenient model selection criteria for the proposed model.
Note that this search, in line with \citet[][p.~724]{Bier:Cele:Gova:Asse:2000}, could be separately conducted according to the type of application, direct or indirect, of the model. 

\section*{Acknowledgements}
The author sincerely thanks Salvatore Ingrassia for helpful comments and suggestions.    

\bibliographystyle{natbib}
\bibliography{References}

\end{document}